\newtheorem{remark}{Remark}[section]
\newtheorem{proposition}{Proposition}[section]
\newcommand{\DSS}{\displaystyle}
\newtheorem{thm}{Theorem}[section]
\newtheorem{cor}{Corollary}[section]
\newcommand{\half}{\frac{1}{2}}
\newcommand{\Cset}{\mathbb{C}}
\newcommand{\Rset}{\mathbb{R}}
\DeclareMathOperator{\tr}{tr}
\newcommand{\ie}{\hbox{\it i.e.\/}, }
\begin{document}
\begin{frontmatter}
\title{Global passive system approximation}
\author{L. Knockaert\fnref{label01}}
\ead{luc.knockaert@intec.ugent.be}
\address{Dept. Information Technology, IBCN, Ghent University
\\
Gaston Crommenlaan 8, PB 201, B-9050 Gent, Belgium}
\fntext[label01]{Corresponding author : tel. +3292643328, fax
+3292649969. This work was supported by a grant of the Research
Foundation-Flanders (FWO-Vlaanderen)}
\pagestyle{plain} \setcounter{page}{1}
\begin{abstract}
In this paper we present a new approach towards global passive
approximation in order to find a passive transfer function $G(s)$
that is nearest in some well-defined matrix norm sense to a
non-passive transfer function $H(s).$ It is based on existing
solutions to pertinent matrix nearness problems. It is shown that
the key point in constructing the nearest passive transfer
function, is to find a good rational approximation of the
well-known ramp function over an interval defined by the minimum
and maximum dissipation of $H(s).$ The proposed algorithms rely on
the stable anti-stable projection of a given transfer function.
Pertinent examples are given to show the scope and accuracy of the
proposed algorithms.
\end{abstract}
\begin{keyword}
Passivity \sep positive-real lemma  \sep rational approximation
\end{keyword}
\end{frontmatter}
\section{INTRODUCTION}
For linear time-invariant systems, passivity guarantees stability
and the possibility of synthesis of a transfer function by means
of a lossy physical network of resistors, capacitors, inductors
and transformers \cite{and}. Therefore, passivity enforcement
\cite{sgt1} and passification (passivation) \cite{passif} have
become important issues in recent years
\cite{damar,dumi,coelho,sacna,tanji}, especially as more and more
software tools render transfer functions which need passivity
enforcement as a postprocessing step in order to generate reliable
physical models. However, most of the techniques
\cite{sgt1,passif,damar,dumi,coelho,sacna} are local perturbative
and/or feedback approaches with fixed poles, while \cite{tanji} is
based on Fourier approximation, yielding passivated systems with a
large number of poles. \newline In this paper we present a new
global approach in the sense that we find a passive transfer
function $G(s)$ that is nearest in a well-defined matrix norm
sense to a non-passive transfer function $H(s).$ It is based on
existing solutions to some pertinent matrix nearness problems
\cite{Higham89matrixnearness,halmos}. We show that the key point
in constructing the nearest passive transfer function $G(s) ,$ is
to find a good rational approximation for the ramp function
$\max(0,x)$ over an interval defined by the minimum and maximum
dissipation of the non-passive transfer function $H(s).$ It is
also shown that in the Chebyshev or minimax sense this requires
finding a rational Chebyshev approximation of the square root
$\sqrt{x}$ over the interval $[0,1].$ The proposed algorithms rely
heavily on the stable anti-stable projection \cite{kag,Sajoveli}
of a given transfer function. Finally, five pertinent examples,
both SISO and MIMO, are given to show the accuracy and relevance
of the proposed algorithms.
\section{PASSIVITY AND DISSIPATION}
\label{sec:main} Notation : Throughout the paper $X^T$ and $X^H$
respectively denote the transpose and Hermitian transpose of a
matrix $X ,$ and $I_n$ denotes the identity matrix of dimension
$n.$ The Frobenius norm is defined as $ \|X\|_F = \sqrt{ \tr X^H X
}$ and the spectral norm (or 2-norm or maximum singular value) is
defined as $\|X\|_2 = \sqrt{ \lambda_{\mathrm{max}} ( X^H X) } .$
It is easy to show that $\|X^H\|_F = \|X\|_F$ and  $\|X^H\|_2 =
\|X\|_2 .$ For two Hermitian matrices $X$ and $Y,$ the matrix
inequalities $ X
> Y$ or $X \geq Y$ mean that $X-Y$ is respectively positive definite
or positive semidefinite. The closed right halfplane $ \Re e \,
[s]
\geq 0 $ is denoted $\Cset_+ $. \\
For the real system with minimal realization
\begin{subequations}
\label{sub1}
\begin{eqnarray}
\dot{x} &=& A x + B u  \\
y &=& C x + D u
\end{eqnarray}
\end{subequations}
where $B \neq 0,$ $C \neq 0 $ are respectively $n \times p $ and
$p \times n $ real matrices and $A \neq 0$ is a $n \times n $ real
matrix, to be passive, it is required that the $p \times p $
transfer function
\begin{equation}
H(s) = C (s I_n -A)^{-1} B +D \nonumber
\end{equation}
is analytic in $\Cset_+$, such that
\begin{equation}
H(i \omega) + H(i \omega)^H \geq 0 \quad \forall \, \omega \in
\Rset \nonumber
\end{equation}
It is well-known \cite{lmi} that the positive-real lemma in linear
matrix inequalty (LMI) format~: $\exists \; P^T=P
> 0$ such that
\begin{equation}
\left[ \begin{array}{cc} A^T P +P A & PB - C^T
\\ B^T P - C & - D -D^T \end{array} \right] \leq 0 \nonumber
\end{equation}
guarantees the passivity of the system (\ref{sub1}). A necessary,
but not sufficient, condition for passivity is that $A$ is stable,
i.e., its eigenvalues are located in the closed left halfplane. In
the sequel we will always suppose that $A$ is Hurwitz stable,
i.e., its eigenvalues are located in the open left halfplane. We
will also assume, unless otherwise stated, that $H(s)$ is
non-passive, and devise ways of finding another as close as
possible passive transfer function $G(s).$ \\
In order to measure how far a given system is from passive we
define the minimum dissipation $\delta_- (H)$ \cite{boyd} as
\begin{equation}
\delta_- (H) = \min_{\omega \in \Rset} \lambda_{\mathrm{min}} \, [
R(\omega) ] \nonumber
\end{equation}
where
\begin{equation}
R(\omega) = H(i \omega) + H(i \omega)^H \nonumber
\end{equation}
Similarly, we also define the maximum dissipation $\delta_+ (H)$
as
\begin{equation}
\delta_+ (H) = \max_{\omega \in \Rset} \lambda_{\mathrm{max}} \, [
R(\omega)] \nonumber
\end{equation}
It is clear that the system is passive if and only if  $\delta_-
(H) \geq 0.$ If $\delta_- (H) < 0$ the system is non-passive, and
if $\delta_+ (H) \leq 0 ,$ the system is anti-passive, in the
sense that then the system with transfer function $-H(s)$ is
passive. \\ In the sequel we will assume, unless otherwise stated,
that the system is non-passive but passifiable, i.e., $-\infty <
\delta_- (H) < 0 < \delta_+ (H) < \infty.$ To obtain $\delta_-
(H)$ (or similarly $\delta_+ (H)$), a simple bisection algorithm,
based on the existence (or non-existence) of imaginary eigenvalues
of the one-parameter Hamiltonian matrix
\begin{equation}
\mathsf{N}_{\delta} = \left[ \begin{array}{cc} A & 0 \\ 0 & -A^T
\end{array} \right] + \left[ \begin{array}{c} B \\ - C^T \end{array} \right] (\delta I_p
- D - D^T)^{-1} \left[ \begin{array}{cc} C & B^T \end{array}
\right] \nonumber
\end{equation}
was proposed in \cite{boyd}. We have
\begin{proposition}
\label{thm0} $\delta > \delta_-(H)$ if and only if $
\mathsf{N}_{\delta}$ admits purely imaginary eigenvalues.
\end{proposition}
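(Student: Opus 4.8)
The plan is to reduce the $2n\times 2n$ spectral problem for $\mathsf{N}_{\delta}$ to a pointwise condition on the $p\times p$ Hermitian matrix $R(\omega)$, and then to exploit continuity of the eigenvalues of $R(\omega)$ together with the intermediate value theorem. First I would write $\mathsf{N}_{\delta}=M+LK^{-1}N$, with $M$ the block-diagonal matrix $\mathrm{diag}(A,-A^{T})$, $L$ the block column with blocks $B$ and $-C^{T}$, $N=[\,C\ \ B^{T}\,]$ the block row, and $K=\delta I_p-D-D^{T}$, and compute the characteristic polynomial through the Sylvester determinant identity
\begin{equation}
\det(sI_{2n}-\mathsf{N}_{\delta})=\det(sI_{2n}-M)\,\det(K^{-1})\,\det\!\big(K-N(sI_{2n}-M)^{-1}L\big). \nonumber
\end{equation}
Because $M$ is block diagonal, $(sI_{2n}-M)^{-1}$ is block diagonal, and a short computation gives $N(sI_{2n}-M)^{-1}L=H(s)+H(-s)^{T}-D-D^{T}$, whence $K-N(sI_{2n}-M)^{-1}L=\delta I_p-H(s)-H(-s)^{T}$.

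Next I would evaluate on the imaginary axis. Since the realization is real, $\overline{H(i\omega)}=H(-i\omega)$, so $H(-i\omega)^{T}=H(i\omega)^{H}$ and $H(i\omega)+H(-i\omega)^{T}=R(\omega)$. As $A$ is Hurwitz, $sI_{2n}-M$ has no purely imaginary zeros and $\det(K^{-1})$ is a nonzero constant, so the factorization shows that $i\omega$ is an eigenvalue of $\mathsf{N}_{\delta}$ if and only if $\det(\delta I_p-R(\omega))=0$, that is, if and only if $\delta$ is an eigenvalue of $R(\omega)$ for some real $\omega$.

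It then remains to match this with the condition $\delta>\delta_-(H)$. If $\delta<\delta_-(H)$ then $\delta<\lambda_{\mathrm{min}}[R(\omega)]$ for every $\omega$, so $\delta I_p-R(\omega)$ is negative definite and never singular, and $\mathsf{N}_{\delta}$ has no imaginary eigenvalue. Conversely, for $\delta$ in the range relevant to the bisection, namely $\delta_-(H)<\delta\le\lambda_{\mathrm{min}}(D+D^{T})$ with the latter the limiting value of $\lambda_{\mathrm{min}}[R(\omega)]$ as $\omega\to\pm\infty$, I would pick a minimizer $\omega_0$ with $\lambda_{\mathrm{min}}[R(\omega_0)]=\delta_-(H)<\delta$; continuity of $\lambda_{\mathrm{min}}[R(\cdot)]$ and the intermediate value theorem between $\omega_0$ and $\pm\infty$ then produce an $\omega_1$ with $\lambda_{\mathrm{min}}[R(\omega_1)]=\delta$, so $\delta$ is an eigenvalue of $R(\omega_1)$ and $i\omega_1$ is a purely imaginary eigenvalue of $\mathsf{N}_{\delta}$.

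The algebraic factorization is routine; the delicate part is the analytic step. One must use attainment of the minimum together with the behavior $R(\omega)\to D+D^{T}$ at infinity to guarantee the crossing, and one should observe that at the transition $\delta=\delta_-(H)$ the symmetric pair $\pm i\omega_0$ coalesces on the imaginary axis, which is exactly the boundary the bisection detects and the reason the equivalence is phrased with a strict inequality. Care is also needed to ensure that $K=\delta I_p-D-D^{T}$ stays invertible along the search so that the factorization remains valid, which holds throughout the relevant interval.
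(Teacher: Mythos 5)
The paper does not actually prove this proposition: its ``proof'' is the single line ``See \cite{boyd}.'' So your attempt cannot be compared to an in-paper argument; what it does is reconstruct the standard proof from the cited reference, and it does so essentially correctly. Your two main steps are sound. The Schur-complement/Sylvester factorization is right: $N(sI_{2n}-M)^{-1}L = H(s)+H(-s)^{T}-D-D^{T}$, so on the imaginary axis $\det(i\omega I_{2n}-\mathsf{N}_{\delta})$ equals $\det(i\omega I_{2n}-M)\det(K^{-1})\det\bigl(\delta I_p-R(\omega)\bigr)$, and since $A$ is Hurwitz the first factor never vanishes there; hence $i\omega$ is an eigenvalue of $\mathsf{N}_{\delta}$ exactly when $\delta$ is an eigenvalue of $R(\omega)$. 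The intermediate-value argument, using continuity of $\lambda_{\mathrm{min}}[R(\cdot)]$ and the limit $R(\omega)\to D+D^{T}$ as $\omega\to\pm\infty$, then produces the required frequency crossing. This is precisely the mechanism that makes the bisection method of \cite{boyd} work.

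Two boundary points need more care than you give them, although both are really imprecisions inherited from the statement itself. First, you allow $\delta\le\lambda_{\mathrm{min}}(D+D^{T})$ in the reverse direction; at that right endpoint $K=\delta I_p-D-D^{T}$ is singular, so $\mathsf{N}_{\delta}$ is not even defined, and your IVT argument in any case only reaches values strictly below the limiting value $\lambda_{\mathrm{min}}(D+D^{T})$ --- the inequality must be strict. Moreover, the restriction to this bracket is not cosmetic but necessary: for $\delta>\delta_+(H)\ge\lambda_{\mathrm{max}}(D+D^{T})$ the matrix $K$ is invertible, yet $\delta$ is never an eigenvalue of $R(\omega)$, so $\mathsf{N}_{\delta}$ has no imaginary eigenvalues even though $\delta>\delta_-(H)$; the proposition as literally stated is therefore only correct for $\delta$ inside the bisection bracket, which your proof (rightly) acknowledges but the paper does not. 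Second, when the minimum dissipation is attained at a finite frequency $\omega_0$, your own factorization shows that $\mathsf{N}_{\delta}$ at $\delta=\delta_-(H)$ \emph{does} have the purely imaginary eigenvalue $i\omega_0$, so the ``only if'' direction with strict inequality fails at that single point; the clean statement, as in the cited reference, is that for $\delta<\lambda_{\mathrm{min}}(D+D^{T})$ one has $\delta\ge\delta_-(H)$ if and only if $\mathsf{N}_{\delta}$ admits purely imaginary eigenvalues. Your closing remark about the coalescing pair $\pm i\omega_0$ gestures at this phenomenon but does not repair the logic. Neither caveat undermines the core of your argument, which is correct and is exactly the proof the paper delegates to the literature.
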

\begin{proof} See \cite{boyd}.
\end{proof}
It is clear that Proposition \ref{thm0} always allows to decide,
by checking the eigenvalues of $ \mathsf{N}_{\delta},$ whether
$\delta > \delta_-(H)$ or not. This forms the basis of the
bisection algorithm of \cite{boyd}. The only problem is to start
with a so-called bracket, i.e., provable lower and upper bounds
for
$\delta_- (H) .$ For that purpose we have \\
\begin{proposition}
\label{thm1}
\begin{equation}
-2 \|H\|_{\infty} \leq \delta_- (H) \leq \lambda_{\mathrm{min}} (D
+ D^T) \leq \lambda_{\mathrm{max}} (D + D^T) \leq \delta_+ (H)
\leq 2 \|H\|_{\infty} \label{hinfi}
\end{equation}
\end{proposition}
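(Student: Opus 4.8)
The plan is to split the six-term chain into three independent pieces. The central inequality $\lambda_{\mathrm{min}}(D+D^T) \leq \lambda_{\mathrm{max}}(D+D^T)$ is immediate, since $D+D^T$ is Hermitian and its smallest eigenvalue never exceeds its largest. The real work lies in the two inequalities that tie the dissipation extrema to the eigenvalues of $D+D^T$, and in the outer inequalities that bound the extrema by $2\|H\|_{\infty}$.

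For the middle-to-$D$ bounds I would exploit that $A$ is Hurwitz. Then $(i\omega I_n - A)^{-1} \to 0$ as $|\omega| \to \infty$, so $H(i\omega) \to D$ and consequently $R(\omega) = H(i\omega) + H(i\omega)^H \to D + D^T$. By continuity of the eigenvalues, $\lambda_{\mathrm{min}}[R(\omega)] \to \lambda_{\mathrm{min}}(D+D^T)$ and $\lambda_{\mathrm{max}}[R(\omega)] \to \lambda_{\mathrm{max}}(D+D^T)$ along this limit. Since $\delta_-(H)$ is the minimum of $\lambda_{\mathrm{min}}[R(\omega)]$ over all $\omega$, it cannot exceed the limiting value, giving $\delta_-(H) \leq \lambda_{\mathrm{min}}(D+D^T)$; dually, since $\delta_+(H)$ is a maximum, $\delta_+(H) \geq \lambda_{\mathrm{max}}(D+D^T)$.

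For the outer bounds I would invoke the elementary spectral-norm estimates valid for any Hermitian matrix $R$, namely $-\|R\|_2 \leq \lambda_{\mathrm{min}}(R)$ and $\lambda_{\mathrm{max}}(R) \leq \|R\|_2$. Applying these with $R = R(\omega)$, together with the triangle inequality and the identity $\|M^H\|_2 = \|M\|_2$ (noted in the Notation paragraph), yields $\|R(\omega)\|_2 \leq \|H(i\omega)\|_2 + \|H(i\omega)^H\|_2 = 2\|H(i\omega)\|_2 \leq 2\|H\|_{\infty}$ for every $\omega$. Taking the minimum over $\omega$ of the lower bound and the maximum over $\omega$ of the upper bound then delivers $-2\|H\|_{\infty} \leq \delta_-(H)$ and $\delta_+(H) \leq 2\|H\|_{\infty}$, closing the chain.

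All steps are essentially routine; the one point requiring genuine care is the limiting argument in the second group. The definition of $\delta_\pm(H)$ uses a minimum/maximum over the unbounded set $\Rset$, and the extremum need not be attained at a finite frequency. The Hurwitz hypothesis is exactly what guarantees that $H(i\omega) \to D$ as $|\omega| \to \infty$, so that the asymptotic value $\lambda_{\mathrm{min/max}}(D+D^T)$ is a legitimate competitor in the min/max and the inequalities follow from taking the limit rather than requiring an attained optimum. I expect this to be the only place where an honest justification, rather than a one-line estimate, is needed.
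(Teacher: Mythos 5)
Your proposal is correct. The paper's own proof consists of the single word ``Straightforward'' (plus the definition of $\|H\|_{\infty}$), and your argument --- the limit $H(i\omega) \to D$ as $|\omega| \to \infty$ (valid since $A$ is Hurwitz) for the two middle inequalities, and the triangle inequality with $\|M^H\|_2 = \|M\|_2$ for the outer bounds $\pm 2\|H\|_{\infty}$ --- is precisely the natural fleshing-out of what the author deemed straightforward, including the honest treatment of the extremum over the unbounded frequency set.
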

\begin{proof} Straightforward. Here the infinity norm $
\|H\|_{\infty}$ is defined as
\begin{equation}
\|H\|_{\infty} = \max_{\omega \in \Rset} \|H(i \omega) \|_2
\nonumber
\end{equation}
\end{proof}
Note that we can replace $ \|H\|_{\infty}$ in (\ref{hinfi}) by an
upper bound such as the one given in \cite{boyd}.
\section{MATRIX NEARNESS CONSIDERATIONS}
\begin{thm}
\label{thm2} Let $A=A^H$ be any Hermitian matrix with
eigendecomposition $A=U \Lambda U^H, $ with $U$ a unitary and
$\Lambda$ a real diagonal matrix. Then the positive semidefinite
Hermitian matrix nearest to $A,$ both with respect to the
Frobenius and spectral norms, is given by $A_+ = U \max(0,\Lambda)
U^H .$
\end{thm}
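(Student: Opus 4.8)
The plan is to handle the two norms separately, but in both cases to begin by exploiting the fact that the Frobenius and spectral norms are unitarily invariant, so that conjugating by $U$ reduces everything to a statement about the diagonal matrix $\Lambda$. I would write $\Lambda = \Lambda_+ + \Lambda_-$, where $\Lambda_+ = \max(0,\Lambda)$ collects the nonnegative eigenvalues and $\Lambda_- = \min(0,\Lambda)$ the negative ones, so that $A = A_+ + A_-$ with $A_- = U \Lambda_- U^H \leq 0$ and $A - A_+ = A_-$. Let $P$ denote an arbitrary positive semidefinite Hermitian candidate and set $Q = U^H P U$, which is again positive semidefinite since $U$ is unitary.

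For the Frobenius norm, unitary invariance gives $\|A - P\|_F = \|\Lambda - Q\|_F$, and the next step is to split this into diagonal and off-diagonal contributions:
\[
\|\Lambda - Q\|_F^2 = \sum_i |\lambda_i - Q_{ii}|^2 + \sum_{i \neq j} |Q_{ij}|^2 .
\]
Because $Q \geq 0$ forces every diagonal entry $Q_{ii} \geq 0$, and because $\max(0,\lambda_i)$ is the nearest nonnegative real number to $\lambda_i$, each diagonal term obeys $|\lambda_i - Q_{ii}|^2 \geq |\lambda_i - \max(0,\lambda_i)|^2 = \min(0,\lambda_i)^2$, while the off-diagonal sum is nonnegative. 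This yields $\|A - P\|_F \geq \|\Lambda_-\|_F = \|A - A_+\|_F$, with equality attained by the diagonal, positive semidefinite choice $Q = \Lambda_+$, that is $P = A_+$.

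For the spectral norm, I would first record that the achieved distance is $\|A - A_+\|_2 = \|A_-\|_2 = -\lambda_{\mathrm{min}}(A)$, the modulus of the most negative eigenvalue. The matching lower bound is where the real work lies: taking $v$ to be a unit eigenvector of $A$ for $\lambda_{\mathrm{min}}(A) < 0$, the Hermitian matrix $A - P$ satisfies $\|A - P\|_2 \geq |v^H (A - P) v| = |\lambda_{\mathrm{min}}(A) - v^H P v|$. Since $v^H P v \geq 0$ by positive semidefiniteness of $P$ while $\lambda_{\mathrm{min}}(A) < 0$, the quantity inside the absolute value is at most $\lambda_{\mathrm{min}}(A)$, so its modulus is at least $-\lambda_{\mathrm{min}}(A)$. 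Hence $\|A - P\|_2 \geq -\lambda_{\mathrm{min}}(A) = \|A - A_+\|_2$, which closes the argument.

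I expect the spectral-norm lower bound to be the crux: the key idea is to probe $A - P$ along the eigenvector associated with the most negative eigenvalue of $A$, where the positivity constraint on $P$ can only push the Rayleigh quotient further in the unfavorable direction and so cannot help. The Frobenius part is essentially forced once unitary invariance decouples the diagonal entries and one observes that the diagonal of a positive semidefinite matrix is nonnegative.
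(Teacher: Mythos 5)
Your proof is correct, and it diverges from the paper's in an interesting way on the spectral-norm half. For the Frobenius norm you follow the same skeleton as the paper (unitary invariance, then a diagonal/off-diagonal split), but your version is actually tighter logically: the paper simply asserts that ``the minimum occurs when $Y$ is diagonal'' and then optimizes the diagonal entries, which quietly ignores that the constraint $Y\geq 0$ couples the entries; you instead derive a lower bound valid for \emph{every} feasible $Q$ by using $Q\geq 0 \Rightarrow Q_{ii}\geq 0$ termwise, and then exhibit $Q=\Lambda_+$ as attaining it -- this is the argument the paper implicitly needs. For the spectral norm the routes genuinely differ: the paper does not prove the lower bound at all, but cites the known nearness result (Higham/Halmos) that $\min_{X\geq 0}\|X-A\|_2 = \inf\{r\geq 0 : A+rI\geq 0\} = \max(0,-\lambda_{\mathrm{min}}(A))$, and then merely checks that $A_+$ achieves this value; you prove the lower bound from scratch by probing the Hermitian matrix $A-P$ with the unit eigenvector $v$ of $\lambda_{\mathrm{min}}(A)$, where $v^HPv\geq 0$ can only worsen the Rayleigh quotient. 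Your argument buys self-containedness at essentially no extra length, while the paper's buys brevity by outsourcing the crux to the literature. One cosmetic point: your identity $\|A-A_+\|_2=-\lambda_{\mathrm{min}}(A)$ and the eigenvector argument presuppose $\lambda_{\mathrm{min}}(A)<0$; you should state (as the paper does) that when $A$ has no negative eigenvalues the claim is trivial because $A_+=A$ gives distance zero. That is an edge case, not a gap.
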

\begin{proof}
First we give the proof for the Frobenius norm. We need to find
\begin{equation}
\min_{X \geq 0} \| X - A \|_F \nonumber
\end{equation}
Putting $X= U Y U^H ,$ and exploiting the unitary invariance of
the Frobenius norm, we obtain
\begin{equation}
\| X - A \|_F^2 = \| Y - \Lambda \|_F^2 = \sum_{i \neq j}
|Y_{ij}|^2 + \sum_i |Y_{ii} - \Lambda_{ii}|^2 \nonumber
\end{equation}
It is clear that the minimum occurs when $Y_{ij}=0$ for $i \neq
j,$ in other words when $Y$ is diagonal. Hence we obtain
\begin{equation}
\| X - A \|_F^2 = \| Y - \Lambda \|_F^2 = \sum_i |Y_{ii} -
\Lambda_{ii}|^2 \nonumber
\end{equation}
It is easy to see that we must take  $Y_{ii} = \max (0,
\Lambda_{ii}) $ and this completes the proof for the Frobenius
norm. Note that
\begin{equation}
\min_{X \geq 0} \| X - A \|_F = \sqrt{ \sum_{\lambda_i (A) <0}
\lambda_i(A)^2} \nonumber
\end{equation}
For the spectral norm, it is known
\cite{Higham89matrixnearness,halmos} that
\begin{equation}
\min_{X \geq 0} \| X - A \|_2= \inf \{ r \geq 0 : A + r I \geq 0
\} \nonumber
\end{equation}
In other words,
\begin{equation}
\min_{X \geq 0} \| X - A \|_2= \max (0, -\lambda_{\mathrm{min}}
(A)) \nonumber
\end{equation}
Now
\begin{equation}
\|A_+ - A \|_2= \max_{\lambda_i (A) <0} |\lambda_i (A)| \nonumber
\end{equation}
which is zero when there are no negative eigenvalues, and
$-\lambda_{\mathrm{min}} (A)$ when there are negative eigenvalues.
\end{proof}
\begin{remark}
\label{rem1} From Theorem \ref{thm2} it is possible to find the
point-wise nearest positive semidefinite matrix for the Hermitian
matrix $R(\omega)= H(i \omega)+ H(i \omega)^H .$ Obviously, if we
decompose $R(\omega)$ as
\begin{equation}
R(\omega) = U(\omega) \Lambda(\omega) U(\omega)^H \nonumber
\end{equation}
then the point-wise nearest positive semidefinite matrix is
\begin{equation}
R_+(\omega) = U(\omega) \max(\Lambda(\omega),0) U(\omega)^H
\nonumber
\end{equation}
Unfortunately, in general, the entries of $R_+(\omega)$ will not
consist of rational functions and therefore cannot represent the
transfer function of an LTI model on the imaginary axis. This
problem, which in fact amounts to a rational approximation
problem, will be addressed in the sequel.
\end{remark}
\section{RATIONAL APPROXIMATIONS}
\begin{thm}
\label{thm3} Let $H(s)$ be passifiable , i.e., $-\infty < \delta_-
(H) < 0 < \delta_+ (H)< \infty,$ and let $R(\omega)= H(i \omega) +
H(i\omega)^H .$ Let further $f(x)$ be a real-rational
function\footnote{A real-rational function $f(x)$ is a rational
function assuming only real values for all real $x.$ } satisfying
\begin{equation}
\alpha \geq f(x) - \max(x,0) \geq 0 \quad \forall x \in
[\delta_-(H),\delta_+(H)] \label{constr1}
\end{equation}
for some finite positive $\alpha .$ Then $f(R(\omega))$ is
positive semidefinite for all $\omega \in \Rset.$ Furthermore we
have
\begin{equation}
\| f(R(\omega)) - R_+ (\omega) \|_2 \leq \alpha \quad \forall
\omega \in \Rset \nonumber
\end{equation}
\end{thm}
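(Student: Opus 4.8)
The plan is to exploit the fact that $R(\omega)$ is Hermitian with spectrum trapped inside the bracket $[\delta_-(H),\delta_+(H)]$, and then to reduce both assertions to scalar statements about the eigenvalues via the functional calculus and the unitary invariance of the spectral norm.

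First I would fix $\omega$ and use the eigendecomposition $R(\omega) = U(\omega)\Lambda(\omega)U(\omega)^H$ as in Remark \ref{rem1}. The crucial preliminary observation is that, by the very definitions of $\delta_-(H)$ and $\delta_+(H)$ as the global minimum and maximum over $\omega$ of $\lambda_{\mathrm{min}}[R(\omega)]$ and $\lambda_{\mathrm{max}}[R(\omega)]$, every eigenvalue $\lambda_i(\omega)$ of $R(\omega)$ lies in $[\delta_-(H),\delta_+(H)]$ for every $\omega$. In particular the constraint (\ref{constr1}) controls $f(x)-\max(x,0)$ on the entire spectrum; this also guarantees that $f$ has no pole on $[\delta_-(H),\delta_+(H)]$, so that $f(R(\omega)) = U(\omega)f(\Lambda(\omega))U(\omega)^H$ is well-defined, with $f(\Lambda(\omega))$ the diagonal matrix carrying the entries $f(\lambda_i(\omega))$.

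Next, for positive semidefiniteness, I would note that (\ref{constr1}) gives $f(x)\geq\max(x,0)\geq 0$ throughout the bracket, hence $f(\lambda_i(\omega))\geq 0$ for each $i$. Since $U(\omega)$ is unitary, $f(R(\omega))$ inherits exactly these nonnegative eigenvalues, proving it is positive semidefinite.

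Finally, for the norm estimate, I would combine the eigendecomposition of $f(R(\omega))$ with that of $R_+(\omega)=U(\omega)\max(\Lambda(\omega),0)U(\omega)^H$ from Remark \ref{rem1}, noting that the two share the same eigenvector matrix $U(\omega)$. Their difference is $U(\omega)\,[f(\Lambda(\omega))-\max(\Lambda(\omega),0)]\,U(\omega)^H$, a Hermitian matrix whose eigenvalues are precisely $f(\lambda_i(\omega))-\max(\lambda_i(\omega),0)$. By unitary invariance of the spectral norm, $\|f(R(\omega))-R_+(\omega)\|_2 = \max_i |f(\lambda_i(\omega))-\max(\lambda_i(\omega),0)|$, and (\ref{constr1}) bounds each term by $\alpha$, giving the claim uniformly in $\omega$. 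The argument is essentially routine once the spectral-confinement observation is in place; the only point requiring care is that $R_+(\omega)$ and $f(R(\omega))$ be simultaneously diagonalized by the same $U(\omega)$, which is what makes the spectral norm collapse to a scalar maximum over the eigenvalues.
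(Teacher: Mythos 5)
Your proposal is correct and follows essentially the same route as the paper's own proof: eigendecomposition of $R(\omega)$, the observation that all eigenvalues lie in $[\delta_-(H),\delta_+(H)]$, and unitary invariance of the spectral norm to reduce both claims to the scalar bound (\ref{constr1}). Your remarks that (\ref{constr1}) rules out poles of $f$ on the bracket (so $f(R(\omega))$ is well-defined) and your direct eigenvalue argument for positive semidefiniteness are minor refinements of the paper's argument, not a different approach.
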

\begin{proof}
We have
\begin{equation}
f(R(\omega)) - R_+ (\omega) = U(\omega) \left\{
f(\Lambda(\omega))- \max( \Lambda(\omega),0) \right\} U(\omega)^H
\geq 0 \nonumber
\end{equation}
Since $R_+ (\omega)$ is positive semidefinite, the same holds for
$f(R(\omega)).$ Now, since the spectral norm is unitarily
invariant, we have
\begin{eqnarray}
\| f(R(\omega)) - R_+ (\omega) \|_2 & \leq & \max_i | f(\lambda_i(
\omega)) - \max( \lambda_i (\omega) ,0)| \nonumber \\
 & \leq & \max_{\omega \in \Rset} \max_i | f(\lambda_i(
\omega)) - \max( \lambda_i (\omega) ,0)| \nonumber \\
& \leq & \max _{ x \in [\delta_- (H),\delta_+ (H)]} \{  f(x) -
\max(x,0) \} \leq \alpha \nonumber
\end{eqnarray}
where the last inequality follows from the fact that all
$\lambda_i( \omega)$ are inside the interval $ \left[ \delta_-
(H),\delta_+ (H) \right] .$ This completes the proof.
\end{proof}
Theorem \ref{thm3} shows that the matrix $R_+(\omega)$ can be
approximated from above by the matrix $f(R(\omega)).$ The problem
is to find a suitable real-rational function $f(x).$ We have the
following :
\newline
\begin{thm}
\label{thm5} Let $\zeta_n(x) =x (1+x)^n/((1+x)^n -1).$ Then
\begin{equation}
\frac{1}{n} \geq \zeta_n(x) -\max (x,0) \geq 0 \quad \forall x
\geq -1, \quad n=1,2,3, \dots \nonumber
\end{equation}
\end{thm}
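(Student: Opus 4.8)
The plan is to remove the awkward form of $\zeta_n$ by the substitution $t=1+x$, so that the hypothesis $x\geq -1$ becomes $t\geq 0$. First I would invoke the elementary factorization $(1+x)^n-1=x\sum_{k=0}^{n-1}(1+x)^k$, i.e. $t^n-1=(t-1)S_n(t)$ with $S_n(t):=\sum_{k=0}^{n-1}t^k$, to rewrite
\begin{equation}
\zeta_n(x)=\frac{(t-1)t^n}{t^n-1}=\frac{t^n}{S_n(t)} \nonumber
\end{equation}
which is valid for every $t\geq 0$; in particular this cancellation removes the apparent $0/0$ singularity at $x=0$ (that is $t=1$), where the simplified expression returns $\zeta_n(0)=1/n$ since $S_n(1)=n$.

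Next I would split into the two regimes dictated by $\max(x,0)$. For $x\geq 0$ (equivalently $t\geq 1$) a one-line computation gives
\begin{equation}
\zeta_n(x)-x=\frac{t^n}{S_n(t)}-(t-1)=\frac{t^n-(t^n-1)}{S_n(t)}=\frac{1}{S_n(t)} \nonumber
\end{equation}
Since every term of $S_n(t)=1+t+\cdots+t^{n-1}$ is at least $1$ when $t\geq 1$, we have $n\leq S_n(t)$, so $0<1/S_n(t)\leq 1/n$, which is exactly both of the claimed inequalities on this half-line. For $-1\leq x<0$ (equivalently $0\leq t<1$) we have $\max(x,0)=0$, and the lower bound $\zeta_n(x)=t^n/S_n(t)\geq 0$ is immediate because $t^n\geq 0$ and $S_n(t)\geq 1>0$.

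The one step that genuinely needs an argument, and which I expect to be the main (though still modest) obstacle, is the upper bound on $[-1,0)$, namely $\zeta_n(x)=t^n/S_n(t)\leq 1/n$, i.e. $n\,t^n\leq S_n(t)$. I would prove this termwise: for $0\leq t\leq 1$ and $0\leq k\leq n-1$ one has $t^n=t^k\,t^{\,n-k}\leq t^k$ because $t^{\,n-k}\leq 1$, and summing over $k$ yields $n\,t^n=\sum_{k=0}^{n-1}t^n\leq\sum_{k=0}^{n-1}t^k=S_n(t)$. Combining the two regimes, with continuity covering the point $x=0$, gives $0\leq\zeta_n(x)-\max(x,0)\leq 1/n$ for all $x\geq -1$ and all $n\geq 1$, which completes the argument.
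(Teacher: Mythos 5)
Your proof is correct, and it takes a genuinely different route from the paper's. The paper argues via calculus: it writes $\zeta_n(x)-x=\bigl(((1+x)^n-1)/x\bigr)^{-1}$ to handle $x\geq 0$, and then establishes by a derivative computation (analyzing the sign of $n-(n+1)t+t^{n+1}$) that $\zeta_n$ is increasing on all of $[-1,\infty)$, so that the error $\zeta_n(x)-\max(x,0)$ rises from $0$ to $1/n$ on $[-1,0]$ and falls from $1/n$ to $0$ on $[0,\infty)$. You instead exploit the algebraic cancellation $\zeta_n(x)=t^n/S_n(t)$ with $t=1+x$ and $S_n(t)=\sum_{k=0}^{n-1}t^k$, which removes the removable singularity at $x=0$ outright and reduces both regimes to termwise polynomial inequalities: $\zeta_n(x)-x=1/S_n(t)\in(0,1/n]$ for $t\geq 1$, and $nt^n\leq S_n(t)$ for $0\leq t\leq 1$ since $t^n\leq t^k$ termwise. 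Your argument is more elementary (no derivatives, no limit analysis at $t=0$ or $t=\infty$) and exhibits the error in closed form on the right half-line; what the paper's proof buys in exchange is the monotonicity structure of the error --- unimodal with its maximum exactly at $x=0$ --- which is qualitative information about where the worst-case approximation error occurs, though it is not needed for the stated inequality.
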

\begin{proof}
First we prove that $\zeta_n(x) -x \geq 0$ for $x \geq 0.$ We have
\begin{equation}
\zeta_n(x) -x = \left( \frac{(1+x)^n-1}{x} \right)^{-1} \nonumber
\end{equation}
which is a positive and decreasing function for $x \geq 0 .$ Next
we prove that $\zeta_n(x)$ is increasing for all $x \geq -1.$ This
is equivalent to proving that $\zeta_n(t-1)=(t^{n+1}
-t^n)/(t^n-1)$ is increasing for all $ t \geq 0.$ This is clearly
the case for $n=1.$ Taking derivatives, we have
\begin{equation}
\frac{ d }{dt} \, \zeta_n(t-1)=\left[n-(n+1)t+t^{n+1} \right] \,
\frac{ t^{n-1}}{(t^n-1)^2} \nonumber
\end{equation}
Now $n-(n+1)t+t^{n+1}$ is $n$ when $t=0$ and $\infty$ when
$t=\infty.$ Since the derivative of $n-(n+1)t+t^{n+1}$ is
$(n+1)(t^n-1),$ the function $n-(n+1)t+t^{n+1}$ attains its unique
minimum (with value zero) at $t=1.$ Hence $\zeta_n(x)$ is
increasing for all $x \geq -1.$ We therefore conclude that
$\zeta_n(x)-\max (x,0)$ increases from 0 to $1/n$ in the interval
$[-1,0],$ and decreases from $1/n$ to 0 in the interval
$[0,\infty],$ which completes the proof.
\end{proof}
\begin{cor} \label{cor1}
Let $H(s)$ be passifiable. Then the real-rational function $f(x) =
\nu \zeta_n( x/ \nu)$ with $\nu = |\delta_- (H)|$ satisfies the
premises of Theorem \ref{thm3} with  $\alpha = \nu/n.$
\end{cor}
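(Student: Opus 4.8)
The plan is to reduce the claim entirely to Theorem \ref{thm5} by a single homogeneous rescaling of the variable. Since $\delta_-(H) < 0$ we have $\nu = |\delta_-(H)| = -\delta_-(H) > 0$, so dividing by $\nu$ is a legitimate change of variable. I would introduce $y = x/\nu$, so that by definition $f(x) = \nu\,\zeta_n(y)$. The only fact about $\max(\cdot,0)$ that I need is its positive homogeneity: because $\nu > 0$, we have $\max(x,0) = \max(\nu y,0) = \nu\,\max(y,0)$. Combining these two observations gives the single identity $f(x) - \max(x,0) = \nu\bigl(\zeta_n(y) - \max(y,0)\bigr)$, which converts the corollary's inequalities into the ones already established in Theorem \ref{thm5}.

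With that identity in hand, the core step is to apply Theorem \ref{thm5} to the rescaled argument $y$. That theorem guarantees $0 \leq \zeta_n(y) - \max(y,0) \leq 1/n$ for every $y \geq -1$. Multiplying through by the positive constant $\nu$ preserves the direction of both inequalities and yields $0 \leq f(x) - \max(x,0) \leq \nu/n$, which is exactly the constraint (\ref{constr1}) of Theorem \ref{thm3} with the claimed value $\alpha = \nu/n$.

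The one point that genuinely needs checking — and the only place where the specific choice $\nu = |\delta_-(H)|$ matters — is the translation of the domain condition $y \geq -1$ back into a statement about $x$. Here $y \geq -1$ is equivalent to $x \geq -\nu = \delta_-(H)$, so the lower endpoint of the target interval is pinned down precisely by the choice of $\nu$; any smaller $\nu$ would fail to cover $x = \delta_-(H)$, and any larger $\nu$ would enlarge the bound $\alpha$ unnecessarily. The upper endpoint $\delta_+(H)$ requires no attention at all, since Theorem \ref{thm5} already holds for all $y \geq -1$ with no upper restriction. Thus every $x \in [\delta_-(H),\delta_+(H)]$ satisfies $y \geq -1$, the inequalities above hold throughout the interval, and the premises of Theorem \ref{thm3} are verified. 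I expect no real obstacle; the argument is essentially a one-line scaling once the homogeneity of $\max(\cdot,0)$ and the exact endpoint matching are noted.
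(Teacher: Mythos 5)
Your proof is correct and is precisely the scaling argument the paper has in mind: the paper's own proof of Corollary \ref{cor1} is simply the word ``Straightforward,'' and your substitution $y = x/\nu$, the positive homogeneity $\max(\nu y,0)=\nu\max(y,0)$, and the endpoint check $y\geq -1 \Leftrightarrow x\geq \delta_-(H)$ supply exactly the omitted details reducing the claim to Theorem \ref{thm5}.
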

\begin{proof}
Straightforward.
\end{proof}
Also, we need to find ways and means to define the matrix $f (
R(\omega))= f ( H(i \omega )+ H(i \omega)^H)$ in the whole
$s-$plane and then to extract a Hurwitz stable transfer function
from it. By analytical continuation, we find the transfer function
$V(s)=f (H(s) + H(-s)^T)$ in the entire $s-$plane. Since $f(x)$ is
real-rational, the transfer function $V(s)$ represents the
realization of a per-symmetric LTI model, \ie satisfying $V(s) =
V(-s)^T .$ This implies that the poles of $V(s)$ admit the
imaginary axis as symmetry axis. The following proposition
indicates how, starting from a per-symmetric LTI model $V(s)$ we
can find a Hurwitz stable transfer function by additive
decomposition \cite{kag,Sajoveli}.
\begin{proposition}
\label{prop1} Let $V(s)$ be per-symmetric, \ie $V(s) = V(-s)^T ,$
such that $V(s)$ has no poles on the imaginary axis. Then $V(s)$
can be decomposed as $V(s) = X(s) + X(-s)^T,$ where $X(s)$ is
Hurwitz stable.
\end{proposition}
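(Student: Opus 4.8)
The plan is to combine the classical stable/anti-stable additive decomposition of a proper rational matrix with the per-symmetry hypothesis, splitting the constant feedthrough term evenly between the two halves. First I would perform a partial-fraction (additive) decomposition
\begin{equation}
V(s) = V_\infty + V_+(s) + V_-(s), \nonumber
\end{equation}
where $V_\infty = V(\infty)$ is the constant term, $V_+(s)$ collects all partial fractions whose poles lie in the open left halfplane and $V_-(s)$ collects all those whose poles lie in the open right halfplane, both $V_+$ and $V_-$ vanishing at infinity. This splitting is well defined precisely because $V(s)$ has no poles on the imaginary axis, and it is unique: any strictly proper rational matrix that is analytic in the \emph{whole} $s$-plane must be identically zero, so the stable, anti-stable and constant parts of $V$ are uniquely determined. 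This is exactly the stable anti-stable projection of \cite{kag,Sajoveli}, computable from a realization via the spectral projector onto the stable invariant subspace of $A$.

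Next I would impose $V(s) = V(-s)^T$. Transposition does not move poles, and $s \mapsto -s$ reflects them across the imaginary axis, so $V_+(-s)^T$ is strictly proper with all poles in the open right halfplane, while $V_-(-s)^T$ is strictly proper with all poles in the open left halfplane. Writing $V(-s)^T = V_\infty^T + V_-(-s)^T + V_+(-s)^T$ and matching the constant, stable and anti-stable parts by uniqueness yields
\begin{equation}
V_\infty = V_\infty^T, \qquad V_-(s) = V_+(-s)^T. \nonumber
\end{equation}
In words, per-symmetry forces $V_\infty$ to be symmetric and pairs each stable pole $p$ of $V_+$ with the mirror anti-stable pole $-p$ of $V_-$, with conjugate-transposed residues.

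Finally I would set
\begin{equation}
X(s) = V_+(s) + \half V_\infty . \nonumber
\end{equation}
The only poles of $X(s)$ are those of $V_+$, all in the open left halfplane, so $X$ is Hurwitz stable. Using $V_+(-s)^T = V_-(s)$ and $V_\infty^T = V_\infty$ gives
\begin{equation}
X(s) + X(-s)^T = V_+(s) + \half V_\infty + V_-(s) + \half V_\infty = V(s), \nonumber
\end{equation}
which is the claimed decomposition.

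The main obstacle is the bookkeeping of the non-dynamic term rather than the dynamic part: the purely stable/anti-stable pieces glue together automatically through the pole-mirroring identity, whereas per-symmetry only guarantees that $V_\infty$ is symmetric, so it must be split evenly so that $\half V_\infty$ and its transpose recombine correctly. One should also note that if $V$ is improper, the same remark applies to its (necessarily even, symmetric) polynomial part, which has no finite poles and may be assigned to $X$ by the identical halving device; in the proper case considered here only $V_\infty$ needs this treatment.
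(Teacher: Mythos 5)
Your proposal is correct and follows essentially the same route as the paper: split off the feedthrough term $V(\infty)$, apply the unique stable/anti-stable additive decomposition to the strictly proper part, use per-symmetry to conclude that the anti-stable part equals the reflected transpose of the stable part, and halve the (necessarily symmetric) constant term. The only differences are cosmetic: you justify the symmetry of $V_\infty$ and the matching step explicitly via uniqueness, while the paper asserts them directly, and the paper additionally notes that $X(s)$ is only unique up to an arbitrary additive skew-symmetric matrix $E$, a degree of freedom your construction fixes by taking $E=0$.
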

\begin{proof}
Putting $V(s) = V_0(s) +D,$ where $V_0(s)$ is strictly proper and
$D=D^T=V(\infty),$ we can decompose $V_0(s)$ uniquely into its
stable and anti-stable parts, i.e.,
\begin{equation}
V_0(s) = X_{stab} (s) + X_{anti} (s) \nonumber
\end{equation}
Since $V_0(s)$ is per-symmetric we have
\begin{equation}
X_{stab} (s) + X_{anti} (s)= X_{stab} (-s)^T + X_{anti} (-s)^T
\nonumber
\end{equation}
and hence $X_{anti} (s)= X_{stab} (-s)^T.$ It follows that $V(s)$
can be decomposed as $V(s) = X(s) + X(-s)^T,$ with $ X(s) =
X_{stab} (s) + \frac{1}{2} D + E ,$ where $E$ is an arbitrary
skew-symmetric matrix. It should be noted that the procedure is
unique when the skew-symmetric matrix $E$ is known a priori.
\end{proof}
\begin{remark}
\label{remv} Proposition \ref{prop1} assumes that $V(s),$ in our
case $V(s)=f (H(s) + H(-s)^T) ,$ does not admit poles on the
imaginary axis. By the inequality constraints (\ref{constr1}) we
know that
\begin{equation}
\alpha \geq f( \lambda_i (\omega)) - \max(\lambda_i(\omega),0)
\geq 0 \label{constr2}
\end{equation}
for all eigenvalues $\lambda_i (\omega)$ of $R(\omega).$ Since
$H(s)$ is assumed Hurwitz stable, $R(\omega)= H(i \omega)+
H(i\omega)^H$ cannot admit real poles, and hence, by the
inequalities (\ref{constr2}), the functions $f( \lambda_i
(\omega))$ are bounded. It follows that all entries of $V(i
\omega) = f(R(\omega))$ are bounded, which implies that $V(s)$
cannot have poles on the imaginary axis.
\end{remark}
In the sequel we will use the Matlab\textregistered\ Robust
Control Toolbox \cite{RCT3} routine \texttt{stabproj} based on the
stable, anti-stable decomposition algorithm \cite{Sajoveli}.
\section{TWO ALGORITHMS}
By Theorem \ref{thm3} and Corollary \ref{cor1} we need to find an
LTI model with transfer function $\phi_n \left( H(s) + H^T (-s)
\right)$ where the real-rational function $\phi_n(x)$ of
denominator degree $n$ and numerator degree $n+1$ is
\begin{equation}
\phi_n(x) = \nu \zeta_n (x/\nu) = \frac{ x (1+x/ \nu)^n}{(1+x/
\nu)^n -1} \nonumber
\end{equation}
where $\nu = |\delta_- (H)| .$ Now it is easy to show that the
following recurrence relationship holds :
\begin{equation}
\phi_{2n}(x) = \frac{ \phi_n(x)^2}{2 \phi_n (x) - x} \quad
n=1,2,3, \ldots \nonumber
\end{equation}
with $\phi_1 (x) = x + \nu .$ \\ A first algorithm ( Algorithm 1)
that comes readily to the mind with $Z(s)=H(s)+H^T(-s)$ is~: \\
Initial value :
\begin{equation}
Z_0(s) = Z(s) + \nu I_p \nonumber
\end{equation}
Loop :
\begin{equation}
\mbox{for}  \quad k=1 \quad \mbox{to} \quad n_1 \, : \quad Z_k (s)
= Z_{k-1} (s) \left( 2 Z_{k-1} (s) - Z(s) \right)^{-1} Z_{k-1} (s)
\nonumber
\end{equation}
It is seen that the associated  $\alpha_k$ upper bound at each
step $Z_k(s), \quad k=0,1,\ldots , n_1 ,$ is $\alpha_k = \nu / 2^k
,$ and all $Z_k(i \omega)$ are, by construction, positive
semidefinite. Since the $Z_k(s)$ are all per-symmetric, we can use
Proposition \ref{prop1} to decompose all (or only the $n_1$th one)
$Z_k(s)$ in their stable and anti-stable parts as
\begin{equation}
Z_k(s) = Z_k^{stab} (s) + Z_k^{stab} (-s)^T \nonumber
\end{equation}
As a last, but necessary step, we must add the skew-symmetric
matrix $\half ( D-D^T) ,$ since this matrix gets deleted when
making the sum $Z(s)=H(s)+H^T(-s).$ In other words, the passive
Hurwitz stable approximant $G_k(s)$ is
\begin{equation}
G_k(s) = Z_k^{stab} (s)  + \half ( D-D^T) \nonumber
\end{equation}
As a very simple illustrative example take $k=0 .$ Since $Z_0(s) =
H(s)+H^T(-s) + \nu I_p ,$ we obtain easily that
\begin{eqnarray}
G_0(s) &=& Z_0^{stab} (s)  + \half ( D-D^T)
\nonumber \\
&=& (H(s)-D) +
 \half (D +D^T + \nu I_p) + \half ( D-D^T) \nonumber \\
 &=& H(s) + \frac{\nu}{2} \, I_p \nonumber
\end{eqnarray}
which is passive by construction. In practice, Algorithm 1 has the
drawback that  the transfer functions $Z_k(s)$ in the algorithmic
loop may not be minimal realizations, and hence it could happen
that the stable anti-stable projection by means of the routine
\texttt{stabproj} might not perform well. \\
Before proposing a second algorithm, and in order to address the
computational complexity of the passivated transfer function
$G(s),$ we want to estimate the number of poles of $G(s) .$ We
suppose that $f(x)$ is an irreducible real-rational function with
denominator degree $M$ and numerator degree $M+1 .$ In this paper
this is always the case, see also Section \ref{oapp}. Hence, if we
suppose that all the poles are simple, we can decompose $f(x)$
into partial fractions as
\begin{equation}
f(x) = \alpha_0 + \beta_0 x + \sum_{k=1}^M \frac{ \alpha_k}{x-
\beta_k} \nonumber
\end{equation}
Now if the original Hurwitz stable transfer function $H(s)$ has
$N$ poles, then the transfer function $Z(s) = H(s)+H(-s)^T$ has
$2N$ poles. Also, $f(Z(s))$ can be written as
\begin{equation}
f(Z(s)) = \alpha_0 I_p + \beta_0 Z(s) + \sum_{k=1}^M \alpha_k (
Z(s)- \beta_k I_p)^{-1} \nonumber
\end{equation}
Hence, the set of poles of $f(Z(s))$ is at most the union of the
sets of poles of $Z(s)$ and $ ( Z(s)- \beta_k I_p)^{-1} .$ It is
well known \cite{CST4}, that when a transfer function $H(s)$ is
such that $H(\infty)$ is invertible, then $H(s)^{-1}$ exists and
has the same number of poles as $H(s).$ Therefore, the number of
poles of $f(Z(s)),$ not considering potential cancellations, is $2
N(M+1).$ Finally, after the stable anti-stable decomposition, this
number is to be divided by two, to yield $N(M+1)$ poles for the
final passivated transfer function $G (s) .$ Of course the number
$N(M+1)$ is only an estimate, since pole-zero cancellations can
occur. If for some reason, the number of poles of the
\textit{explicitly proved passive} transfer function $G (s) $
appears to be unacceptable high, a final judiciously chosen
passivity preserving model order reduction step
\cite{Chen95,Kno05,Sorensen05,Antoulas05,knodfd11} can be applied.
\\ Hence, in order to find a workable algorithm, we have to find
the partial fractions decomposition of $f(x)= \phi_n(x) = \nu
\zeta_n (x/\nu).$ If we restrict ourselves to even $n=2m \geq 2 ,$
we have the partial fraction expansion
\begin{equation}
\zeta_{2m}(x)= x + \frac{1}{m} \, \left( \frac{1}{x+2} + \Re e
\sum_{k=1}^{m-1} \frac{ e^{2 \pi i k/m} - e^{\pi i k/m}}{x+1 -
e^{\pi i k/m}} \right) \nonumber
\end{equation}
Hence, with $f(x) = \nu \zeta_{2m} (x/\nu)$ we have
\begin{equation}
f(Z(s)) = Z(s) + \frac{ \nu^2}{m}  \, \left( [Z(s) + 2 \nu
I_p]^{-1} + \Re e \sum_{k=1}^{m-1} \left( e^{2 \pi i k/m} - e^{\pi
i k/m} \right) [ Z(s)+ \nu ( 1 - e^{\pi i k/m})I_p]^{-1} \right)
\label{zeta12}
\end{equation}
Algorithm 2 performs the state space addition (\ref{zeta12}) as
is, \ie we add the realizations of $Z(s), \; (\nu^2/m) [Z(s) + 2
\nu I_p]^{-1},$ etc., to obtain $f(Z(s)).$ The explicit state
space form for the terms
\begin{equation}
\Re e \left[ \left( e^{2 \pi i k/m} - e^{\pi i k/m} \right) [
Z(s)+ \nu ( 1 - e^{\pi i k/m})I_p]^{-1} \right] \nonumber
\end{equation}
in formula (\ref{zeta12}) is obtained by the state space technique
described in the Appendix. Finally, the stable anti-stable
projection yields the passivated transfer function $G(s).$
\subsection{Numerical Examples}
We will consider only reciprocal non-passive systems, \ie systems
with $H(s)=H(s)^T,$ as these systems are representative of LTI
systems satisfying the electromagnetic condition known as Lorentz
reciprocity \cite{lk2}. Of course the theory also remains valid
for non-reciprocal LTI systems. Since for reciprocal systems
$R(\omega)$ is real and even, this explains why the plots in the
sequel only show values for non-negative frequencies.
\subsubsection{First example}  \noindent As a first
example we take the SISO Hurwitz stable non-passive transfer
function
\begin{equation}
H(s) = \frac{s^5 + 7.2 s^4 + 47.01 s^3 + 230.8 s^2 + 536.6 s +
587.1}{s^5 + 3.2 s^4 + 32.61 s^3 + 43.63 s^2 + 117.5 s + 104.3}
\label{ttp}
\end{equation}
We use the approach of Algorithm 1 with $n_1=2.$ The passivated
approximation $G(s)$ has a non-minimal realization with 65 poles
which are reduced to 20 by the routine \texttt{minreal}
\cite{CST4}. The real and imaginary parts of the original transfer
function $H(s)$ vs. the passivated transfer function $G(s)$ are
shown in Figs \ref{figure1} and \ref{figure2}.
\begin{figure}[htb]
\begin{center}
\includegraphics[height=6cm,width=10cm]{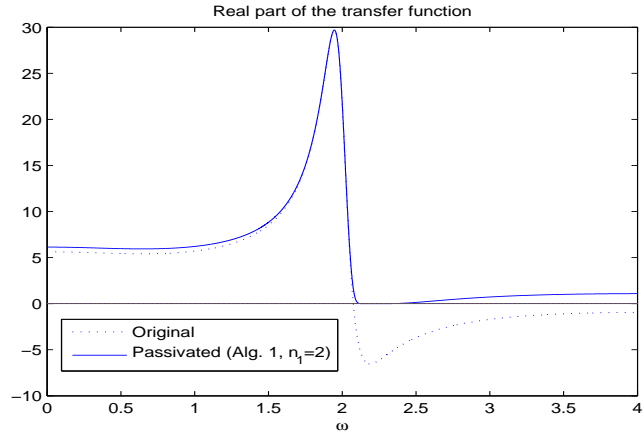}
\caption{Real part of $G(s)$ vs. $H(s)$} \label{figure1}
\end{center}
\end{figure}
\begin{figure}[htb]
\begin{center}
\includegraphics[height=6cm,width=10cm]{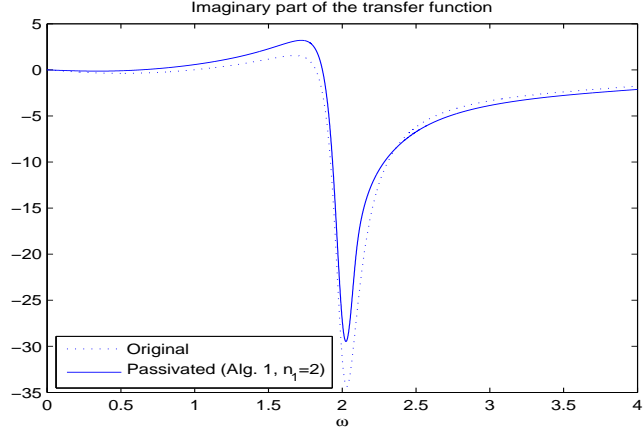}
\caption{Imaginary part of $G(s)$ vs. $H(s)$} \label{figure2}
\end{center}
\end{figure}
\subsubsection{Second example} \noindent As a second example we
take the SISO Hurwitz stable minimum phase non-passive transfer
function
\begin{equation}
H(s)=
\frac{(s+1)(s+3)(s+90)(s+95)(s+100)}{(s+25)(s+35)(s+38)(s+180)(s+185)}
\label{dumi1}
\end{equation}
We use the approach of Algorithm 2 with $m=5.$ The passivated
approximation $G(s)$ has a realization with 50 poles. The real and
imaginary parts of the original transfer function $H(s)$ vs. the
passivated transfer function $G(s)$ are shown in Figs
\ref{figure3} and \ref{figure4}.
\begin{figure}[htb]
\begin{center}
\includegraphics[height=6cm,width=10cm]{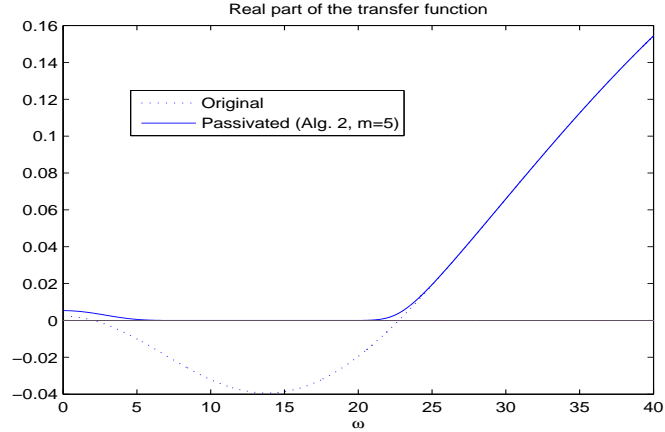}
\caption{Real part of $G(s)$ vs. $H(s)$} \label{figure3}
\end{center}
\end{figure}
\begin{figure}[htb]
\begin{center}
\includegraphics[height=6cm,width=10cm]{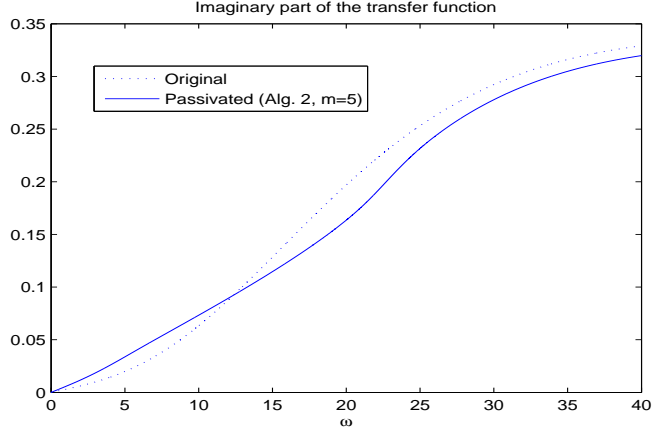}
\caption{Imaginary part of $G(s)$ vs. $H(s)$} \label{figure4}
\end{center}
\end{figure}
\subsubsection{Third example} \noindent As a third example we
take the $2 \times 2$ MIMO Hurwitz stable non-passive transfer
function
\begin{equation}
H(s)= \left[ \begin{array}{cc} 2 + \frac{\DSS 12}{\DSS s^2+3s+2} &
- \frac{\DSS 2s+10}{\DSS s+6} \\ \\ - \frac{ \DSS 2s+10}{ \DSS
s+6} & 2- \frac{\DSS s+3}{\DSS s^2+3s+2}
\end{array} \right] \label{trafe1}
\end{equation}
We use the approach of Algorithm 2 with $m=4.$ The passivated
approximation $G(s)$ has a realization with 48 poles. Fig.
\ref{figure5} plots the values of $\lambda_{min} (G(i \omega)
+G(i\omega)^H)$ vs. $\lambda_{min} (H(i \omega) +H(i\omega)^H).$
To show the nearness of the original and passivated transfer
functions $H(s)$ and $G(s),$ we plot the relative error $\|G(i
\omega) - H(i \omega) \|_2 / \| H(i \omega) \|_2$ in Fig.
\ref{figure6}.
\begin{figure}[htb]
\begin{center}
\includegraphics[height=6cm,width=10cm]{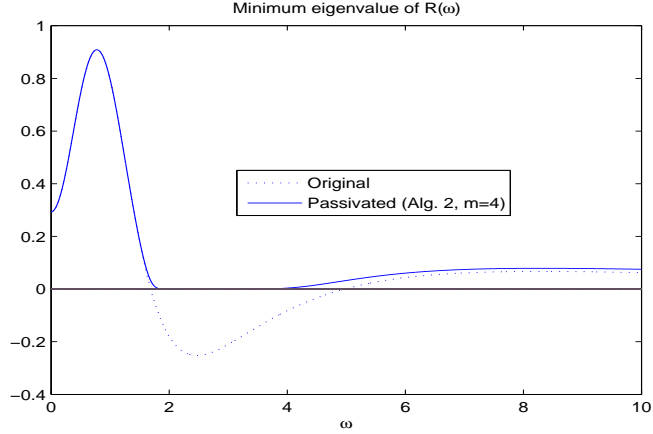}
\caption{Minimum eigenvalues of passivated vs. original transfer
functions} \label{figure5}
\end{center}
\end{figure}
\begin{figure}[htb]
\begin{center}
\includegraphics[height=6cm,width=10cm]{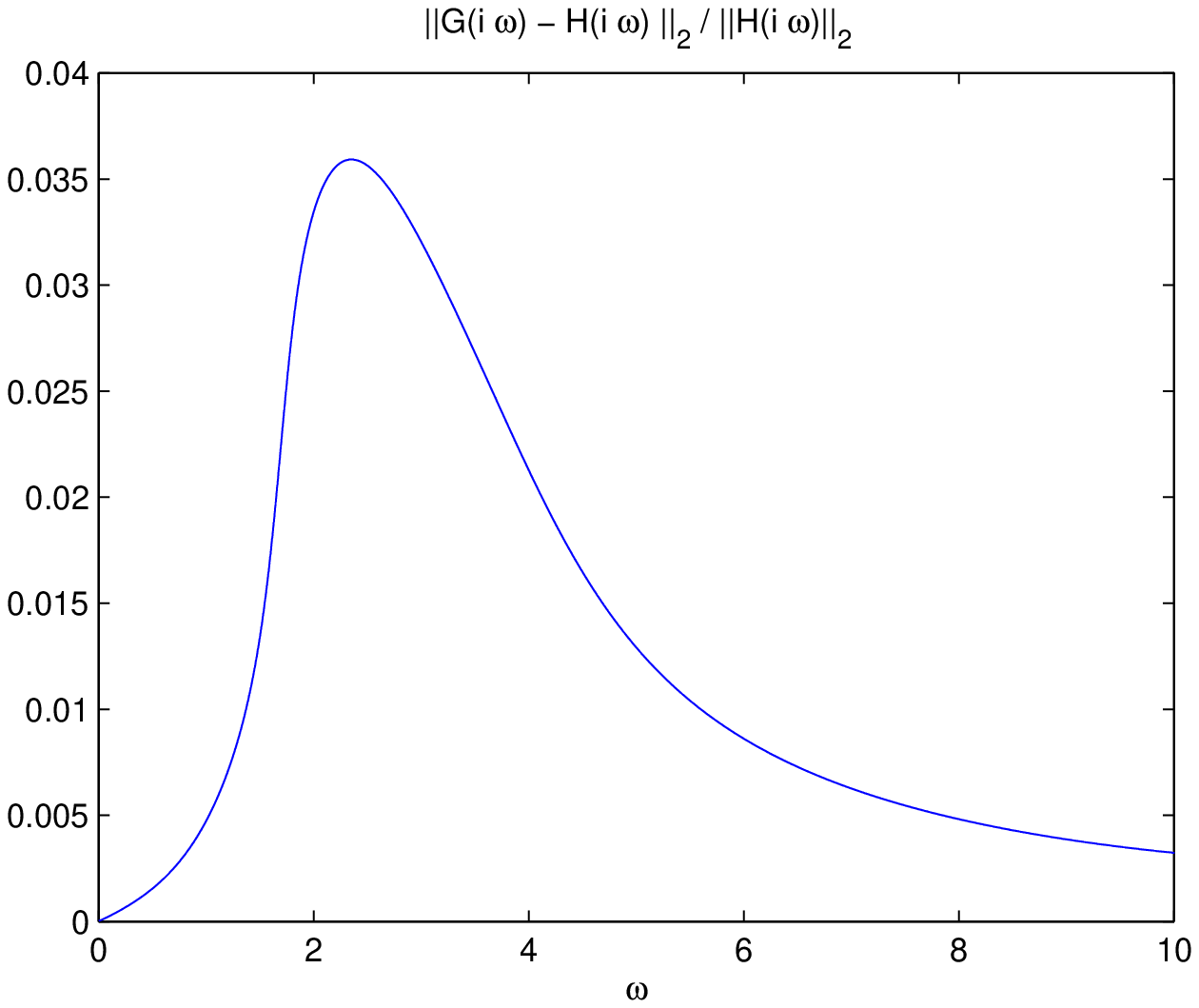}
\caption{Relative error between passivated and original transfer
functions} \label{figure6}
\end{center}
\end{figure}
\section{MINIMAX ALGORITHM} \label{oapp}
The starting point for finding a passive approximant is to find a
real-rational function $f(x)$ that satisfies
\begin{equation}
\alpha \geq f(x) - \max(x,0) \geq 0 \quad \forall x \in [-a,b]
\label{op0}
\end{equation}
where $a= - \delta_-(H) = | \delta_-(H)| $ and $ b= \delta_+(H) .$
Since $\max(x,0)= (|x| + x)/2 ,$ this can be written as
\begin{equation}
\alpha \geq 2 f(x) - x - \alpha -|x| \geq -\alpha \quad \forall x
\in [-a,b] \label{op1}
\end{equation}
Putting $ r(x) = 2 f(x) -x -\alpha ,$ and since our aim is to find
the smallest positive $\alpha$ such that (\ref{op1}) is satisfied,
it is seen that we must find the rational minimax or Chebyshev
approximant, \ie
\begin{equation}
\min_r \max_{x \in [-a,b]} | r(x) - |x| | \nonumber
\end{equation}
Let us first treat the case $a=b=1 ,$ which is well-documented in
the literature \cite{newman1,brutpas1, varga1}. Since $|x|$ is
even and the interval $[-1,1]$ is symmetric with respect to 0, it
is clear that $r(x)$ must be an even rational function, \ie $r(x)
= \rho(x^2) .$ If we take $\rho(t)$ irreducible with numerator and
denominator of exact degree $n ,$ the minimax problem can be
reformulated as:
\begin{equation}
\min_{\rho} \max_{0 \leq t \leq 1} | \rho(t) - \sqrt{t} |
\label{op2}
\end{equation}
Calling $E_n$ the value obtained by the minimax problem
(\ref{op2}), it is clear that at the minimum we must have
\begin{equation}
E_n  \geq \rho(t) - \sqrt{t}  \geq -E_n  \quad \mbox{for} \quad 0
\leq t \leq 1 \label{op3}
\end{equation}
Furthermore, the Remes condition \cite{varga1,ralston1} requires
that there are exactly $2 n +2$ point $t_k$ inside $[0,1]$ where
the equality
\begin{equation}
\sqrt{t_k} -\rho(t_k) = (-1)^k E_n \quad k=1,2, \ldots, 2n+2
\nonumber
\end{equation}
is satisfied. This allows an iterative approach \cite{varga1} to
find the optimal $E_n$ and $\rho(t) .$ The poles and zeros of
$\rho(t)$ are all simple and intertwined on the negative real axis
\cite{blatt1}. It follows that in general $\rho(t)$ can be written
as
\begin{equation}
\rho(t) = a_0 - \sum_{k=1}^n \frac{ a_k}{t+b_k} \nonumber
\end{equation}
where all $a_k, b_k$ are positive. For $n=4$  the coefficients
$a_k, b_k$ with $b_0=E_n$ are given in Table \ref{tab1}.
\begin{table}[htb] \caption{ \label{tab1} Coefficients
for the function $\rho(t)$ for $n=4$ }
\begin{center}
\begin{tabular}{c|c|c|c|c|c}
 $k$ & 0 & 1 & 2 & 3 & 4           \\  \hline
  $a_k$ & 2.6397296257 & 1.4034219887$\times 10^{-6}$
  & 0.0003730797 & 0.0290141901 &
  5.6266532592 \\
  \hline
  $b_k$ & 0.0007365636 & 0.0000917473 & 0.0049831021 &
  0.1014048457 & 2.4866930733
\end{tabular}
\end{center}
\end{table}
\begin{figure}[htb]
\begin{center}
\includegraphics[height=6cm,width=10cm]{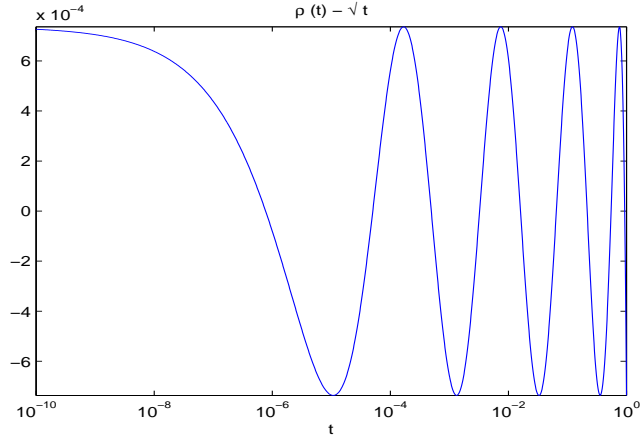}
\caption{Minimax approximation error for $n=4.$} \label{figure7}
\end{center}
\end{figure}
Fig. \ref{figure7} shows the approximation error $\rho(t)-
\sqrt{t}$ and the equioscillation property. Note that the
asymptotic formula of $E_n$ is known \cite{stahl1}, \ie we have $
E_n \approx 8 e^{-\pi \sqrt{2n}}$ for $ n \to \infty .$
\newline Formula (\ref{op3}) implies
\begin{equation}
2 E_n  \geq \rho(x^2)+E_n - |x|  \geq 0  \quad \mbox{ for} \quad
-1 \leq x \leq 1 \nonumber
\end{equation}
or
\begin{equation}
E_n  \geq \frac{\rho(x^2)+ x +E_n}{2} - \max(0,x)  \geq 0  \quad
\mbox{ for} \quad -1 \leq x \leq 1 \label{op4}
\end{equation}
For $a=b=1,$ the best rational function $f(x)$ satisfying
(\ref{op0}) is therefore $f(x) = \half (\rho(x^2)+x+ E_n)$ with
$\alpha=E_n .$ It should be noted that $f(x)$ has numerator degree
$2n +1$ and denominator degree $2 n.$ The case of the general
interval $[-a,b]$ instead of $[-1,1]$ is treated by the following
\begin{thm}
\label{le1} Let $a,b >0$ and $f(x)$ a real-rational function such
that
\begin{equation}
\alpha \geq f(x) - \max(x,0) \geq 0 \quad \mbox{for} \quad -1 \leq
x \leq 1 \nonumber
\end{equation}
Then the real-rational function
\begin{equation}
f_{a,b}(x) = \left[ \frac{x(b-a)+2ab}{b+a} \right] \,f \left(
\frac{x (b+a)}{x(b-a)+2ab} \right) \nonumber
\end{equation}
is such that
\begin{equation}
\alpha \, \max(a,b) \geq f_{a,b}(x)  - \max(x,0) \geq 0 \quad
\mbox{for} \quad -a \leq x \leq b \nonumber
\end{equation}
\end{thm}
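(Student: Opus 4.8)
The plan is to reduce the assertion on $[-a,b]$ to the given hypothesis on $[-1,1]$ by an explicit M\"obius change of variable, exploiting the fact that $\max(\cdot,0)$ is positively homogeneous. First I would isolate the affine prefactor
\[
D(x) = \frac{x(b-a)+2ab}{b+a}
\]
and rewrite the substitution inside $f$ as
\[
y = \frac{x(b+a)}{x(b-a)+2ab} = \frac{x}{D(x)},
\]
so that by definition $f_{a,b}(x) = D(x)\,f(y)$. A direct evaluation gives $D(-a)=a$ and $D(b)=b$; since $D$ is affine in $x$ and positive at both endpoints, it is strictly positive on the whole interval, and moreover $0 < D(x) \le \max(a,b)$ for every $x \in [-a,b]$ because an affine function attains its extrema at the endpoints.

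Next I would confirm that $x \mapsto y$ carries $[-a,b]$ onto $[-1,1]$. Writing $y=g(x)$ and differentiating,
\[
g'(x) = \frac{2ab(b+a)}{\bigl[x(b-a)+2ab\bigr]^2} > 0 \quad \text{on} \quad [-a,b],
\]
so $g$ is strictly increasing there (no pole occurs since $D>0$); combined with $g(-a)=-1$, $g(0)=0$, $g(b)=1$, this shows $g$ is an increasing bijection of $[-a,b]$ onto $[-1,1]$. In particular $y\in[-1,1]$ whenever $x\in[-a,b]$, so the hypothesis on $f$ applies at the image point $y$. I would also note in passing that $f_{a,b}$ is genuinely real-rational, being a product of the rational $D(x)$ with $f$ composed with the rational map $g$, all taking real values at real arguments.

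The central identity is that, since $D(x)>0$, the ramp function scales: $\max(x,0)=\max\bigl(D(x)\,y,\,0\bigr)=D(x)\max(y,0)$. Consequently
\[
f_{a,b}(x) - \max(x,0) = D(x)\,f(y) - D(x)\max(y,0) = D(x)\,\bigl[\,f(y) - \max(y,0)\,\bigr].
\]
Invoking the assumed bound $0 \le f(y) - \max(y,0) \le \alpha$ and multiplying through by the positive factor $D(x)$ yields
\[
0 \le f_{a,b}(x) - \max(x,0) \le \alpha\, D(x) \le \alpha\,\max(a,b),
\]
which is precisely the claimed two-sided estimate on $[-a,b]$.

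I do not expect a genuine obstacle: the argument is a single substitution followed by the homogeneity step. The only points demanding care are the bookkeeping that pins down $D(-a)=a$ and $D(b)=b$ (so that $\max_{[-a,b]}D=\max(a,b)$ supplies the constant $\alpha\max(a,b)$) and the verification that $D$ stays strictly positive on the interval, since that positivity is exactly what licenses both the scaling identity for $\max(\cdot,0)$ and the final multiplication by $D(x)$ without flipping the inequalities.
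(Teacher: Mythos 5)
Your proof is correct and follows essentially the same route as the paper's: the bilinear substitution $g(x)=x/D(x)$ mapping $[-a,b]$ onto $[-1,1]$, the positivity of the affine factor $D(x)=\bigl(x(b-a)+2ab\bigr)/(b+a)$ on $[-a,b]$, and the homogeneity identity $\max(x,0)=D(x)\max(g(x),0)$ followed by the endpoint bound $D(x)\le\max(a,b)$. You merely spell out details the paper leaves implicit (the monotonicity of $g$ and the endpoint values $D(-a)=a$, $D(b)=b$), which is harmless.
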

\begin{proof}
The bilinear transformation $ g(x) = x (b+a)/ (x(b-a)+2ab)$ maps
the interval $[-a,b]$ onto the interval $[-1,1].$ Moreover, the
linear function $x(b-a)+2ab$ is positive over $[-a,b]$ since it is
positive at the endpoints. Hence
\begin{equation}
\alpha \geq f(g(x)) - \max(g(x),0) \geq 0 \quad \mbox{for} \quad
-a \leq x \leq b \nonumber
\end{equation}
implying
\begin{equation}
\alpha \, \frac{x(b-a)+2 a b}{a+b} \geq \left[
\frac{x(b-a)+2ab}{b+a} \right] \, f(g(x)) - \max(x,0) \geq 0 \quad
\mbox{for} \quad -a \leq x \leq b \nonumber
\end{equation}
which completes the proof. Note that, if the denominator degree of
$f(x)$ is $m$ and the numerator degree is $m+1,$ then the same
holds for $f_{a,b}(x) .$
\end{proof}
In light of formula (\ref{op4}), we take $f(x) = \half
(\rho(x^2)+x+ E_n)$ and $\alpha = E_n .$ The function $f_{a,b}(x)
$ can be conveniently written as
\begin{equation}
f_{a,b}(x) = (\tau x + \kappa) f \left( \frac{x}{\tau x + \kappa}
\right) \nonumber
\end{equation}
where
\begin{equation}
\tau = \frac{\delta_+(H)- | \delta_-(H)|}{\delta_+(H)+ |
\delta_-(H)|}, \qquad \kappa= 2 \frac{\delta_+(H)|
\delta_-(H)|}{\delta_+(H)+ | \delta_-(H)|} \nonumber
\end{equation}
For the function $f^{\ell}(x) = 1/(x^2 + \ell),$ the partial
fraction expansion of $f^{\ell}_{a,b}(x)$ is given by
\begin{equation}
f^{\ell}_{a,b} (x) =\frac{ (\tau x +\kappa)^3}{ x^2 + \ell (\tau x
+ \kappa)^2} = x \frac{ \tau^3}{1+ \tau^2 \ell} + \kappa \, \frac{
\tau^2 (3+ \tau^2 \ell)}{ (1+ \tau^2 \ell)^2} +\Re e \left\{
\frac{\eta(\tau,\kappa,\ell)}{x-\xi(\tau,\kappa,\ell)} \right\}
\nonumber
\end{equation}
where
\begin{equation}
\xi(\tau,\kappa,\ell) = \frac{\kappa \sqrt{\ell}}{i-\tau
\sqrt{\ell}}  \qquad \eta(\tau,\kappa,\ell) = \frac{
\xi(\tau,\kappa,\ell)^3}{\kappa \ell^2} \nonumber
\end{equation}
Hence for the function $f(x) = \half (\rho(x^2)+x+ E_n),$ the
transformed function $f_{a,b}(x)$ can be written as
\begin{equation}
f_{a,b}(x)  = \half \left[ x + (E_n + a_0) (\tau x + \kappa) -
\sum_{k=1}^n a_k f^{b_k}_{a,b} (x)  \right] \label{paf1}
\end{equation}
The partial fraction expansion of (\ref{paf1}) is the key of
Algorithm 3, since we ultimately have to calculate $f_{a,b}(Z(s))
,$ where $Z(s) = H(s)+ H(-s)^T.$ The linear terms of (\ref{paf1})
all add up to the compound linear term
\begin{equation}
f^{linear}_{a,b}(x) =\half \left[ x + (E_n + a_0) (\tau x +
\kappa) - \sum_{k=1}^n a_k \left\{ x \frac{ \tau^3}{1+ \tau^2 b_k}
+ \kappa \, \frac{ \tau^2 (3+ \tau^2 b_k)}{ (1+ \tau^2 b_k)^2}
\right\} \right] \nonumber
\end{equation}
leading to a linear term $ f^{linear}_{a,b}( Z(s)) = k_1 Z(s) +
k_2 I_p.$ The remaining terms, obtained by evaluating
\begin{equation}
\Re e \left\{ \eta(\tau,\kappa, b_k) (Z(s)-\xi(\tau,\kappa,b_k)
I_p )^{-1} \right\} \nonumber
\end{equation}
are obtained by the state space technique described in the
Appendix. Finally, as in Algorithm 2, the stable anti-stable
projection of $f_{a,b}(Z(s))$ is performed in order to obtain the
passivated transfer function $G(s).$
\subsection{Numerical Examples}
\subsubsection{First example} \noindent As our first example we
again take the SISO Hurwitz stable minimum phase non-passive
transfer function (\ref{dumi1}), but here we use Algorithm 3 with
$n=4$ and the coefficients of Table \ref{tab1}. The passivated
approximation $G(s)$ has a realization with 45 poles. The real and
imaginary parts of the original transfer function $H(s)$ vs. the
passivated transfer function $G(s)$ are shown in Figs
\ref{figure8} and \ref{figure9}.
\begin{figure}[htb]
\begin{center}
\includegraphics[height=6cm,width=10cm]{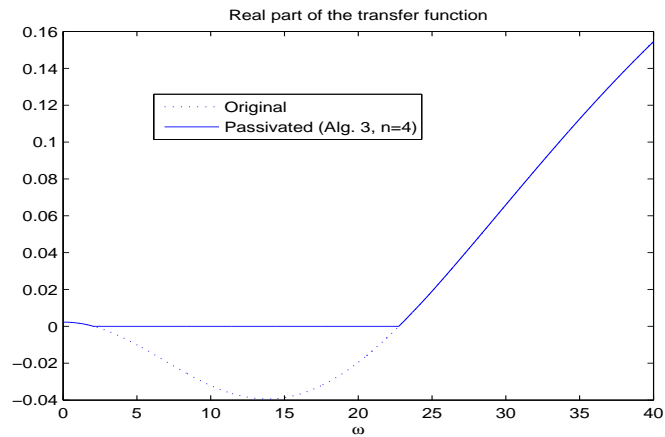}
\caption{Real part of $G(s)$ vs. $H(s)$} \label{figure8}
\end{center}
\end{figure}
\begin{figure}[htb]
\begin{center}
\includegraphics[height=6cm,width=10cm]{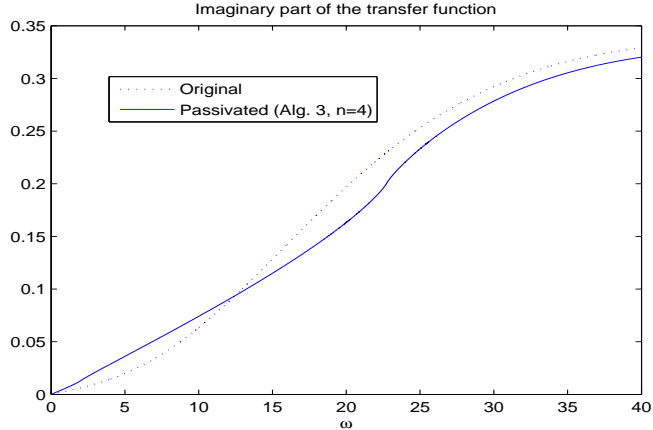}
\caption{Imaginary part of $G(s)$ vs. $H(s)$} \label{figure9}
\end{center}
\end{figure}
It is seen by comparing with Figs \ref{figure3} and \ref{figure4}
that the approximation is better, while requiring 5 poles less.
\subsubsection{Second example} \noindent
\noindent For the second example we again take the MIMO Hurwitz
stable non-passive transfer function (\ref{trafe1}), but here we
use Algorithm 3 with $n=4$ and the coefficients of Table
\ref{tab1}. The passivated approximation $G(s)$ has a realization
with 46 poles. Fig. \ref{figure10} plots the values of
$\lambda_{min} (G(i \omega) +G(i\omega)^H)$ vs. $\lambda_{min}
(H(i \omega) +H(i\omega)^H).$ To show the nearness of the original
and passivated transfer functions $H(s)$ and $G(s),$ we plot the
relative error $\|G(i \omega) - H(i \omega) \|_2 / \| H(i \omega)
\|_2$ in Fig. \ref{figure11}. It is seen by comparing with Figs
\ref{figure5} and \ref{figure6} that the approximation is more or
less similar, but requires 2 poles less.
\begin{figure}[htb]
\begin{center}
\includegraphics[height=6cm,width=10cm]{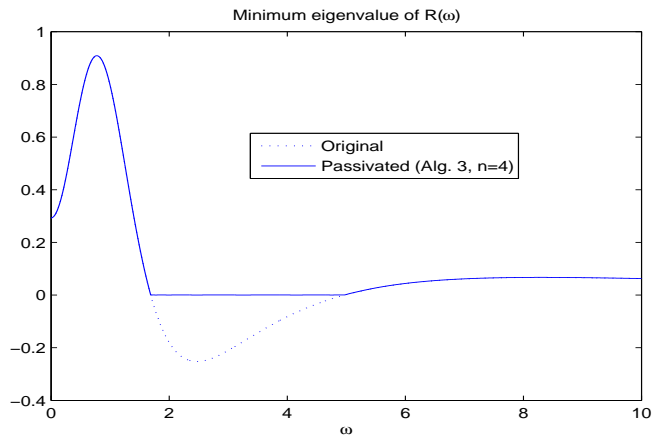}
\caption{Minimum eigenvalues of passivated vs. original transfer
functions} \label{figure10}
\end{center}
\end{figure}
\begin{figure}[htb]
\begin{center}
\includegraphics[height=6cm,width=10cm]{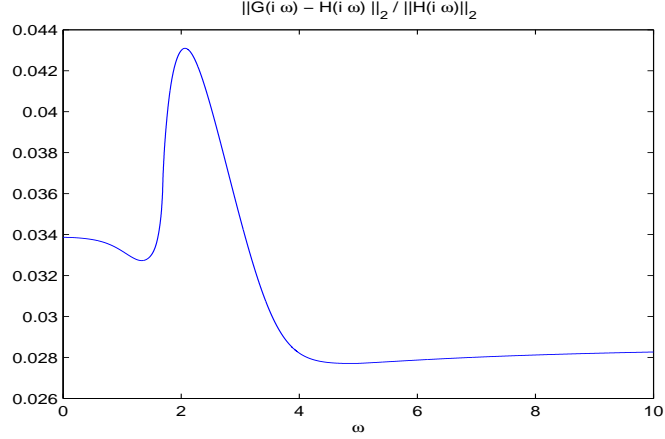}
\caption{Relative error between passivated and original transfer
functions} \label{figure11}
\end{center}
\end{figure}
\section{CONCLUSION}
We have presented a new global passification approach towards
finding a passive transfer function $G(s)$ that is nearest in some
well-defined matrix norm sense to a given non-passive transfer
function $H(s).$ It is shown that the key point in constructing
the nearest passivated transfer function $G(s) ,$ is to find a
good rational approximation to the well-known ramp function over
an interval defined by the minimum and maximum dissipation of the
given non-passive transfer function $H(s).$ It is also shown that
in the Chebyshev or minimax sense this requires finding a rational
Chebyshev approximation of the square root function over the unit
interval. The proposed algorithms rely strongly on the stable
anti-stable projection of a given transfer function. Five
pertinent examples, both SISO and MIMO, are given to show the
accuracy and relevance of the proposed algorithms.
\section{APPENDIX}
Suppose we  have the real-rational transfer function $H(s) =
C(sI_n-A)^{-1}B + D ,$ and we need to evaluate the transfer
function\footnote{Considering $s=d/dt$ as a real operator.}
\begin{equation}
\Re e \, \eta (H(s)- \xi I_p )^{-1} \nonumber
\end{equation}
where $\eta$ and $\xi$ are complex numbers. Suppose also that $ D
- \xi I_p $ is invertible. Putting $ D_{\xi} =  D - \xi I_p ,$ we
have \cite{CST4} that the complex state space transfer function $
\eta (H(s)- \xi I_p )^{-1}$ is given by $ \tilde{C} (s I -
\tilde{A})^{-1} \tilde{B} + \tilde{D}$ where
\begin{equation}
\tilde{A} = A - B D_{\xi}^{-1} C, \quad \tilde{B} = \eta B
D_{\xi}^{-1}, \quad \tilde{C} = -D_{\xi}^{-1} C, \quad \tilde{D} =
\eta D_{\xi}^{-1} \nonumber
\end{equation}
In state space form we have
\begin{eqnarray}
\dot{x} &=& \tilde{A} x + \tilde{B} u  \nonumber \\
y &=& \tilde{C} x + \tilde{D} u \nonumber
\end{eqnarray}
The input $u$ is real, but the output $ y$  is complex. Putting $
y = y_1 + i y_2 ,$ it is clear that we are only interested in
$y_1$ as output. Decomposing all complex vectors and matrices in
their real and imaginary components, we obtain
\begin{eqnarray}
\dot{x}_1  +i \dot{x}_2&=& (\tilde{A}_1 + i \tilde{A}_2) (x_1 + i x_2) + (\tilde{B}_1 + i \tilde{B}_2) u  \nonumber \\
y_1 + i y_2 &=& (\tilde{C}_1 + i \tilde{C}_2) ( x_1 + i x_2) +
(\tilde{D}_1 + i \tilde{D}_2) u \nonumber
\end{eqnarray}
Hence the state space equations with $u$ as input and $y_1$ as
output are simply~:
\begin{eqnarray}
\dot{x}_1 &=& \tilde{A}_1 x_1 - \tilde{A}_2 x_2 + \tilde{B}_1 u  \nonumber \\
\dot{x}_2 &=& \tilde{A}_2 x_1 +  \tilde{A}_1 x_2 +  \tilde{B}_2 u  \nonumber \\
y_1 &=& \tilde{C}_1 x_1 - \tilde{C}_2 x_2 + \tilde{D}_1  u
\nonumber
\end{eqnarray}

\bibliographystyle{elsarticle-num}
\bibliography{Ref_MOR2}

\end{document}